\documentclass[onecolumn,10pt,journal]{IEEEtran}
\usepackage{indentfirst,amsmath,dblfloatfix}
\usepackage{color,rotating,pdflscape}
\usepackage{cite}
\usepackage{amsfonts,amsmath,amssymb,amsthm}
\usepackage{latexsym,amscd}
\usepackage{amsbsy,extarrows}
\usepackage{graphicx,multirow,bm}
\usepackage[table]{xcolor}
\usepackage{extarrows}
\usepackage{tikz}
\usepackage{url} 
\usetikzlibrary{arrows,shapes}
\usetikzlibrary{shadows}

\newtheorem{Theorem}{Theorem}

\newtheorem{Remark}{Remark}
\newtheorem{Lemma}{Lemma}

\newtheorem{Construction}{Construction}

\hyphenation{op-tical net-works semi-conduc-tor}

\begin{document}
\title{Efficient Repair of $(k+2, k)$ Degraded Read Friendly MDS Array Codes With\\ Sub-packetization $2$ }

\author{Jie~Li,~\IEEEmembership{Senior~Member,~IEEE,} 
Xiaohu~Tang,~\IEEEmembership{Fellow,~IEEE}
\thanks{J. Li was with the Department of Mathematics and Systems Analysis,
                    Aalto University, FI-00076 Aalto,  Finland,  (e-mails: jieli873@gmail.com).}

\thanks{X. Tang is with the Information Coding and Transmission Key Lab of Sichuan Province, CSNMT Int. Coop. Res. Centre (MoST), Southwest Jiaotong University, Chengdu, 610031, China (e-mail: xhutang@swjtu.edu.cn).}
}
\maketitle

\begin{abstract}
In this paper, we present two constructions of degraded read friendly (DRF) MDS array codes with two parity nodes and a sub-packetization level of $2$ over small finite fields, applicable for any arbitrary code length. The first construction achieves the smallest repair bandwidth among all existing constructions with the same parameters, and is asymptotically optimal with respect to the lower bound on the average repair bandwidth characterized by Zhang \textit{et al.} The second construction supports two repair mechanisms, depending on whether computation within the helper nodes is permitted or not during the node repair process, thereby optimizing either the repair bandwidth or the rebuilding access. 
\end{abstract}

\section{Introduction}
Distributed storage systems have widespread applications in practice, such as Windows Azure Storage \cite{calder2011windows} and Hadoop-based systems deployed in Facebook and Yahoo \cite{borthakur2007hadoop}. In these applications, data are stored across multiple unreliable storage nodes, making node failures common rather than a rare exception. To ensure reliability, erasure codes have been widely deployed in previous distributed storage systems, e.g., HDFS RAID \cite{hdfs}.
As an important erasure code, MDS codes provide the optimal tradeoff between fault tolerance and storage overhead.
However, MDS codes are inefficient in the face of node repair. For instance, in a distributed storage system based on an $(n,k)$ MDS code, repairing a failed node requires downloading the entire content from any $k$ surviving nodes. This results in an excessive \textit{repair bandwidth}, which is defined as the amount of data downloaded to repair a failed node.

To enable efficient node repair, one can employ MDS array codes, where a codeword is represented by an $N\times n$ array for $N\ge1$. In this context, $N$ is referred to as the \textit{sub-packetization level}. According to the MDS property and the cut-set bound demonstrated in \cite{dimakis2010network}, it has been shown that the repair bandwidth of $(n,k)$ MDS array codes with sub-packetization level $N$ is lower bounded by
\begin{equation}\label{Eqn_gamma_opt_d}
\gamma\ge\gamma_{\rm optimal}=
\left\{
\begin{array}{ll}
k, &\mbox{\ if\ } N=1, \\[12pt]
\frac{d}{d-k+1}N, & \mbox{\ if \ }N>1.
\end{array}
\right.
\end{equation}
Here, $d$ such that $k\le d<n$ denotes the number of helper nodes contacted during the repair process. MDS array codes ($N>1$) that achieve this lower bound are said to have the \textit{optimal repair bandwidth} and are also referred to as MSR codes in \cite{dimakis2010network}. 

Besides the repair bandwidth, \textit{rebuilding access} is also a critical metric for evaluate the performance of an MDS array code, which is defined as the amount of data accessed from the helper nodes during a repair process. Notably, the rebuilding access of an MDS array code is equal to or larger than its repair bandwidth. There are MSR codes where the rebuilding access also meets the lower bound of the repair bandwidth $\gamma_{\rm optimal}$, e.g., the ones in \cite{li2018generic,li2017generic,ye2017explicitB,ye2017explicit,tamo2012zigzag}.

Over the past decade, MSR codes have garnered significant attention \cite{rashmi2011optimal,tamo2012zigzag,papailiopoulos2013repair,tian2014characterizing,han2015update,li2015framework,sasidharan2016explicit,wang2016explicit,li2016optimal,ye2017explicit,ye2017explicitB,li2017generic,goparaju2017minimum,li2018generic,balaji2018erasure,chen2019explicit,hou2019binary,elyasi2020cascade,liu2022generic,li2024msr}. 
However, MSR codes have the drawback of a large sub-packetization level $N$,
e.g., $N\ge r^{\frac{n}{r+1}}$ when $d=n-1$, where $r=n-k$. This constraint affects the selection of system parameters and complicates metadata management, e.g., $n$ and $r$ should be kept small to avoid a large sub-packetization level, thereby hindering implementation in practical systems \cite{rawat2018mds}.  

In this paper, we consider MDS array codes that achieve both small repair bandwidth/rebuilding access and the smallest sub-packetization level. 
To reduce repair bandwidth/rebuilding access, the smallest sub-packetization level is $2$ by \eqref{Eqn_gamma_opt_d}. Moreover, the highest code-rate is $\frac{k}{k+2}$. These aspects are the focus of this work, and the lower bound in \eqref{Eqn_gamma_opt_d} becomes $\gamma\ge n-1$ (which has been shown to be unachievable in subsequent research). An advantage of MDS array codes with a sub-packetization level of $2$ is that discontinuous disk I/O  can be effectively avoided during the repair process, codes with such a focus were also studied in \cite{ma2024repair} and \cite{liu2024family} recently. 

In the literature, there are a few constructions of $(n=k+2, k)$ MDS array codes with an efficient repair strategy and a sub-packetization level of $2$. The MSR code construction in \cite{rashmi2011optimal} includes an MSR code with two parity nodes and a sub-packetization level of $2$, but the code length is limited to $n=5$, and the required finite field size should be $q>10$. In \cite{wang2016explicit}, an example of $(n=6, k=4)$ MDS array code with a sub-packetization level of $2$ was presented, however, the two parity nodes do not endow an optimal repair bandwidth, resulting in an average repair bandwidth higher than that in \cite{rashmi2011optimal}. In \cite{guan2017construction}, an $(n=5,k=3)$ MSR code with a sub-packetization level of $2$ was proposed, which can build over a finite field with size $q\ge 4$, i.e., much smaller than the one in \cite{rashmi2011optimal} while the other properties are the same. This construction was later generalized to $n=6$ in \cite{guan2017new}.
For convenience, we refer to the above MSR codes in \cite{rashmi2011optimal,wang2016explicit,guan2017construction,guan2017new} as the product-matrix MSR code, the long MDS code, the GKW code, and the GKWX code, respectively.

In practice, during the interval between failure and recovery, users may  request for temporarily unavailable data, commonly referred to as degraded reads \cite{khan2012rethinking}, which should be optimized for efficiency.
The DRF property is quite essential in practical systems, as it ensures the system's availability, enhancing users' experience by providing faster access to data, etc. 

In \cite{wu2021achievable}, a lower bound for the average rebuilding access of $(n=k+2, k)$ DRF MDS array codes with a sub-packetization level of $2$ was characterized for general $n$, and a construction matching this lower bound was proposed, which will be referred to as the WHLBZZW code in this paper. 
The results in \cite{wu2021achievable} and recent works \cite{rawat2018mds,li2021systematic,li2023mds,li2023mds-tit} demonstrate that MDS array codes with small sub-packetization levels and large code lengths can be constructed by sacrificing the optimality of the repair bandwidth and rebuilding access w.r.t. the lower bound in \eqref{Eqn_gamma_opt_d}.
However, lower bounds of the average repair bandwidth of $(n=k+2, k)$ DRF MDS array codes with a sub-packetization level of $2$ are still open.

Recall that rebuilding access is also a critical metric in distributed storage. Unlike MSR codes, the lower bound of rebuilding access is not the same as that of the repair bandwidth for general MDS array codes. In the recent work \cite{zhang2025optimal}, Zhang \textit{et al.} showed that the lower bound of the repair bandwidth is smaller than that of the rebuilding access for $(n=k+2, k)$ MDS array codes of sub-packetization level $2$ without the DRF property. 

Since both repair bandwidth and rebuilding access are important metrics, therefore, in this paper, we focus on optimizing the two metrics separately. Specifically,
we propose two new $(n=k+2, k)$ DRF MDS array codes with a sub-packetization level of $2$. The first code aims to optimize the repair bandwidth instead of rebuilding access, its average repair bandwidth is significantly smaller than the lower bound of the average rebuilding access of $(n=k+2, k)$ MDS array codes with a sub-packetization level of $2$ derived in \cite{wu2021achievable}. The second MDS array code supports two repair mechanisms, one optimizing rebuilding access while the other optimizing repair bandwidth. One can choose the repair mechanism based on the computing capability of the helper nodes during the repair process. 
As in some distributed storage systems, such as Ceph \cite{Ceph} and Hadoop \cite{porter2010decoupling,HadoopCluster}, storage nodes are capable of performing computational tasks. Ideally, helper nodes should avoid computation during node repairs when they are already engaged in these tasks. 

To summarize, this paper contains the following contributions.
\begin{itemize}
\item We propose an $(n=4m, k=4m-2)$ MDS array code with a sub-packetization level of $2$ over the finite field $\mathbf{F}_{q}$ with $q=2^{2t}\ge \frac{3n}{4}+1$. The normalized repair bandwidth (i.e., the ratio of the repair bandwidth to the file size) of the new MDS array code is $\frac{5n-8}{8n-16}$, which is much smaller than that of existing $(n=k+2, k)$ MDS array codes with sub-packetization level $2$, and is asymptotically optimal w.r.t. the lower bound derived in \cite{zhang2025optimal}.

\item We propose an $(n=3m, k=3m-2)$ MDS array code with a sub-packetization level of $2$ over the finite field $\mathbf{F}_{q}$ with $q\ge \frac{2n}{3}$ if $2\mid q$ and $q\ge n$ otherwise. This MDS array code supports two kinds of repair mechanisms, the first one leads to an average normalized repair bandwidth of $\frac{2n-3}{3n-6}$, smaller than that in \cite{wu2021achievable}. The second one leads to an average normalized rebuilding access of $\frac{13n-18}{18n-36}$, which is very close to the lower bound in \cite{wu2021achievable}.
\end{itemize}

The rest of this paper is organized as follows. Section \ref{sec:pre} introduces some preliminaries on MDS array codes. Section \ref{sec:new} proposes the new construction of $(n=k+2, k)$ MDS array codes with sub-packetization level $2$ and small repair bandwidth. Section \ref{sec:C2} presents another construction of $(n=k+2, k)$ MDS array codes with multiple repair strategies. An extensive comparison is carried out in Section \ref{sec:comp}. Finally, Section \ref{sec:concl} concludes this paper.

\section{Preliminaries}\label{sec:pre}
In this section, we introduce some preliminaries on high-rate MDS codes. Denote by $q$ a prime power and $\mathbf{F}_q$ the finite field with $q$ elements. For any two integers $a$ and $b$ with $b>a$, denote by $[a: b)$ the set $\{a, a+1, \ldots, b-1\}$.

\subsection{$(n=k+2,k)$ MDS codes}
Let $\mathbf{f}_0, \mathbf{f}_1, \ldots, \mathbf{f}_{n-1}$ represent the data stored across a distributed storage system consisting of $n$ nodes, based on an $(n=k+2,k)$ MDS array code with a sub-packetization level of $2$. Here $\mathbf{f}_i=\begin{pmatrix}
f_{i,0}\\
f_{i,1}
\end{pmatrix}$ is a column vector of length $2$ over $\mathbf{F}_q$. In this paper, similar to that in \cite{wu2021achievable}, we consider $(n=k+2,k)$ DRF MDS codes that permit a definition in the following parity-check form:
\begin{equation}\label{Eqn parity check eq}
\underbrace{\left(\hspace{-2mm}
\begin{array}{cccc}
I & I & \cdots & I \\
A_{0} & A_{1} & \cdots & A_{n-1}
\end{array}
\hspace{-2mm}\right)}_{A}
\left(\hspace{-2mm}\begin{array}{c}
\mathbf{f}_0\\
\mathbf{f}_1\\
\vdots\\
\mathbf{f}_{n-1}
\end{array}
\hspace{-2mm}\right)=\mathbf{0}_{4},
\end{equation}
where $A_i$ ($i\in [0: n)$) is a $2\times 2$ matrix, $\mathbf{0}_{4}$ denotes the zero column vector of length $4$, and will be abbreviated as $\mathbf{0}$ in the sequel if its length is clear. The $4\times 2n$ matrix $A$ in \eqref{Eqn parity check eq} is referred to as the \textit{parity-check matrix} of the $(n=k+2,k)$ MDS code with a sub-packetization level of $2$.

\begin{Remark}
The parity-check matrix of a DRF MDS code can also take other forms, such as replacing some identity matrices in the first block row of the parity-check matrix $A$ with permutation matrices. In this paper, we define the $(k+2, k)$ DRF MDS code with sub-packetization level $2$ as the one that has a parity-check form in \eqref{Eqn parity check eq}.
\end{Remark}

An advantage of restricting the parity-check matrix of the form in \eqref{Eqn parity check eq} rather than a general one as
\begin{equation}\label{Eqn_alt_PC}
\left(\begin{array}{cccc}
B_{0} & B_{1} & \cdots & B_{n-1} \\
A_{0} & A_{1} & \cdots & A_{n-1}
\end{array}\right)
\end{equation} is the improved performance for degraded read of temporarily unavailable data. 
For example, when node $i$ is failed, the degraded read to $f_{i,0}$
 can be performed efficiently by accessing just $f_{j,0}$, $0\le j\ne i<n$ and computing
$-\sum\limits_{j=0,j\ne i}^{n-1}f_{j,0}$. 
In contrast, with a parity-check matrix in the general form, degraded read to $f_{i,0}$ becomes less efficient, as it may require accessing more symbols and performing multiplications of field elements. 

An $(n=k+2,k)$ MDS array code defined by \eqref{Eqn parity check eq} possesses the MDS property that the source file can be reconstructed by connecting to any $k$ out of the $n$ nodes. That is, any $2\times 2$ sub-block matrix of $A$
is nonsingular, i.e, $A_i-A_j$ is nonsingular for $i,j\in [0: n)$ with $i\ne j$ \cite{ye2017explicit}.

To end this subsection, we present a lemma that will be used to verify the MDS property of the new array code in the next section.

\begin{Lemma}\label{lem}
Let $q=2^{2t}$ for some positive integer $t$, and let $w$ be a primitive element of the finite field $\mathbf{F}_q$. Then we have
\begin{equation}
w^{2i}+w^{i}+1\ne0 ~\mbox{for}~ 0\le i<\frac{q-1}{3}.
\end{equation}
\end{Lemma}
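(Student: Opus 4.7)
The plan is to observe that the expression $w^{2i}+w^i+1$ is precisely the polynomial $x^2+x+1$ evaluated at $x = w^i$, and this polynomial is exactly the minimal/cyclotomic polynomial whose roots are the primitive third roots of unity. So the proof reduces to showing that $w^i$ is not a primitive third root of unity for the given range of $i$.

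First I would note that $q = 2^{2t} = 4^t$, so $q - 1 = 4^t - 1$ is divisible by $3$ (since $4 \equiv 1 \pmod{3}$). Hence $\mathbf{F}_q^*$, which is cyclic of order $q-1$, contains a unique subgroup of order $3$, and the two primitive third roots of unity are exactly $w^{(q-1)/3}$ and $w^{2(q-1)/3}$. Next, since $w$ is a primitive element, the map $i \mapsto w^i$ is a bijection from $\{0,1,\dots,q-2\}$ to $\mathbf{F}_q^*$, so $w^i$ is a primitive third root of unity precisely when $i \in \{(q-1)/3,\ 2(q-1)/3\}$.

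From here it is immediate: for $0 \le i < (q-1)/3$, the index $i$ is strictly less than $(q-1)/3$, hence $w^i$ is not a primitive third root of unity, so $x^2+x+1$ does not vanish at $x=w^i$, giving $w^{2i}+w^i+1 \ne 0$. The case $i=0$ deserves a brief sanity check: we get $1+1+1 = 1 \ne 0$ in characteristic $2$, consistent with the general argument.

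I do not anticipate a real obstacle here; the only mildly delicate point is the divisibility $3 \mid q-1$, which relies on the hypothesis $q = 2^{2t}$ (even exponent of $2$) rather than just $q$ being a power of $2$. This is what forces the field to contain the third roots of unity and makes the cyclotomic characterization work, so I would make sure to highlight this hypothesis explicitly in the write-up.
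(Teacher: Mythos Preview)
Your proposal is correct and follows essentially the same idea as the paper: both arguments amount to observing that $w^{2i}+w^i+1=0$ forces $w^{3i}=1$ with $w^i\neq 1$, which cannot happen when $0\le i<(q-1)/3$. The paper phrases this via the factorization $(w^i+1)(w^{2i}+w^i+1)=w^{3i}+1$, while you phrase it via the root description of $x^2+x+1$; your explicit check at $i=0$ is a nice touch, since the paper's factorization argument literally reads $w^{3i}\ne 1$ for all $0\le 3i<q-1$, which glosses over that boundary case.
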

\begin{proof}
For $0\le i<\frac{q-1}{3}$, it follows that $0\le 3i<q-1$, thus, $w^{3i}\ne 1$. Since $(w^i+1)(w^{2i}+w^{i}+1)=w^{3i}+1\ne0$, then
$w^{2i}+w^{i}+1\ne0$.
\end{proof}

\subsection{Repair Process}\label{sec:repair}
Consider the repair of a failed node $i$ ($i\in[0,n)$) of an $(n=k+2,k)$ MDS array code. Let $R_i$ be a $2\times 4$ matrix, multiply which with \eqref{Eqn parity check eq} from both sides we obtain
\begin{equation}\label{Eqn repair eq}
R_i\left(\hspace{-2mm}
\begin{array}{cccc}
I & I & \cdots & I \\
A_{0} & A_{1} & \cdots & A_{n-1}
\end{array}
\hspace{-2mm}\right)
\left(\hspace{-2mm}\begin{array}{c}
\mathbf{f}_0\\
\mathbf{f}_1\\
\vdots\\
\mathbf{f}_{n-1}
\end{array}
\hspace{-2mm}\right)=\mathbf{0}.
\end{equation}
Node $i$ is then regenerated by solving the above equations. Rewrite \eqref{Eqn repair eq} as
\begin{equation}\label{Eqn parity check eq2}
\underbrace{R_i\left(
\begin{array}{c}
I \\
A_{i}
\end{array}
\right)
\mathbf{f}_i}_{\mathrm{useful ~data}}
+\sum_{j=0,j\ne i}^{n-1}\underbrace{R_i\left(\hspace{-2mm}
\begin{array}{c}
I \\
A_{j}
\end{array}
\hspace{-2mm}\right)
\mathbf{f}_j}_{\mathrm{interference ~by~}\mathbf{f}_j}=\mathbf{0}.
\end{equation}
Clearly, regenerating node $i$ requires the coefficient matrix of the useful data in \eqref{Eqn parity check eq2} to be of full rank, i.e.,
\begin{equation}\label{repair_node_requirement1 n-1}
\textrm{rank}\left(R_i\left(
\begin{array}{c}
I \\
A_{i}
\end{array}
\right)\right) =2, \, i\in [0,n),
\end{equation}
and the interference can be cancelled, which is done by downloading a sufficient amount of data from the surviving nodes.
The amount of data that needs to be downloaded (i.e., the repair bandwidth of node $i$) to cancel the interference in \eqref{Eqn parity check eq2} is given by
\begin{equation}\label{Eqn_RB}
\gamma_i=\sum_{j=0,j\ne i}^{n-1}\mbox{rank}\left(R_i\left(
\begin{array}{c}
I \\
A_{j}
\end{array}
\right ) \right).
\end{equation}
Meanwhile, the rebuilding access of node $i$, i.e., the amount of data that needs to be accessed to cancel the interference in \eqref{Eqn parity check eq2} is
\begin{equation}\label{Eqn_RA}
\Gamma_i=\sum_{j=0,j\ne i}^{n-1}N_c\left(R_i\left(
\begin{array}{c}
I \\
A_{j}
\end{array}
\right)\right),
\end{equation}
where $N_c(A)$ denotes the number of nonzero columns of the matrix $A$.

For example, suppose that
\begin{equation}\label{Eqn_interference}
R_i\left(
\begin{array}{c}
I \\
A_{j}
\end{array}
\right)=\begin{pmatrix}
1&1\\
x&x
\end{pmatrix},
\end{equation}
for some $x$ and $\mathbf{f}_j=\begin{pmatrix}
f_{j,0}\\
f_{j,1}
\end{pmatrix}$. To cancel the interference
\begin{equation*}
R_i\left(
\begin{array}{c}
I \\
A_{j}
\end{array}
\right)\mathbf{f}_j,
\end{equation*}
one needs to access two symbols $ f_{j,0}$ and $f_{j,1}$ from node $j$, but only needs to download one symbol $f_{j,0}+f_{j,1}$, which is consistent with
\begin{equation*}
N_c\left(\begin{pmatrix}
1&1\\
x&x
\end{pmatrix}\right)=2 \mbox{~and~} \mbox{rank}\left(\begin{pmatrix}
1&1\\
x&x
\end{pmatrix}\right)=1.
\end{equation*}

\subsection{Bounds on the Repair Bandwidth and Rebuilding Access of $(k+2, k)$ MDS Array Codes}

To facilitate performance comparisons, we define the normalized repair bandwidth of node $i$ of an $(n, k)$ array code with sub-packetization $N$ as the ratio of the repair bandwidth to the file size, i.e.,
\begin{equation*}
\gamma_i^{Nor}=\frac{\gamma_i}{kN}.
\end{equation*}
We further define the \textit{average normalized repair bandwidth} as
\begin{equation*}
\gamma^{Nor}_{Ave}=\frac{\sum\limits_{i=0}^{n-1}\gamma_i^{Nor}}{n}=\frac{\sum\limits_{i=0}^{n-1}\gamma_i}{nkN}.
\end{equation*}
Similarly, the \textit{average normalized rebuilding access} is defined as
\begin{equation*}
\Gamma^{Nor}_{Ave}=\frac{\sum\limits_{i=0}^{n-1}\Gamma_i}{nkN}.
\end{equation*}

In \cite{wu2021achievable}, a lower bound for the average normalized rebuilding access of $(n=k+2, k)$ DRF MDS array codes with a sub-packetization level of $2$ is derived, showing that
\begin{equation}\label{Eqn_Acess_bound}
\gamma^{Nor}_{Ave}\ge \frac{\min\{\Delta_3, \Delta_4\}}{2nk},   
\end{equation}
where
\begin{equation*}
\Delta_3=\min_{l_1+l_2+l_3=n, l_1\le l_2} 2n(n-1)-n^2+l_1^2+l_2^2+l_3^2+l_1(l_3-1)    
\end{equation*}
and 
\begin{equation*}
\Delta_4=\min_{\substack {l_1+l_2+l_3+l_4=n\\ l_1\le l_2, ~l_3\le l_4}} l_1(l_2+1)-l_2(n-l_2)-l_3(l_4+1)-l_4(n-l_4).
\end{equation*}
Furthermore, numerical experiments indicate that $\Gamma^{Nor}_{Ave}\ge  \frac{\Delta_3}{2nk}=\frac{\min\{\Delta_3, \Delta_4\}}{2nk}$ for $n\in [4: 51)$ and
\begin{equation}\label{Eqn_Acess>0.72}
    \Gamma^{Nor}_{Ave}> 0.72.
\end{equation}

In \cite{zhang2025optimal}, the lower bounds on average normalized repair bandwidth and rebuilding access of $(n=k+2, k)$ MDS array codes with a sub-packetization level of $2$ are derived for the non-DRF case, which shows
\begin{equation}\label{Eqn_RB_bound_no}
\gamma^{Nor}_{Ave}\ge \frac{5}{8},   
\end{equation}
and
\begin{equation}\label{Eqn_Acess_bound_no}
\Gamma^{Nor}_{Ave}\ge \frac{4k+1}{6k}.   
\end{equation}

By \eqref{Eqn_Acess>0.72} and \eqref{Eqn_Acess_bound_no}, we note that by sacrificing the DRF property, the rebuilding access of $(k+2, k)$ MDS array codes with a sub-packetization level of $2$ can be smaller than that of the DRF case.

\section{A New Construction of $(k+2, k)$ MDS Array Codes With Asymptotically Optimal  Repair Bandwidth}\label{sec:new}

\begin{Construction}\label{Con1}
Let $w$ be a primitive element of the finite field $\mathbf{F}_q$ and $a,b\in \mathbf{F}_q\backslash\{0,1\}$. We construct an $(n=4m,k=4m-2)$ array code with sub-packetization level $2$ over $\mathbf{F}_q$, where the parity-check matrix is defined by $A$ in \eqref{Eqn parity check eq} and
\begin{equation*}
A_{4i+j}=\left\{
\begin{array}{ll}
w^i \begin{pmatrix}
1 & 0\\
0& 1
\end{pmatrix}, &\mbox{\ if\ } j=0, \\[12pt]
w^i\begin{pmatrix}
a & 0\\
0& b
\end{pmatrix}, & \mbox{\ if \ } j=1,
\\[12pt]
w^i \begin{pmatrix}
b & 0\\
a& a
\end{pmatrix}, & \mbox{\ if\ } j=2,
\\[12pt]
w^i\begin{pmatrix}
b & b\\
0& a
\end{pmatrix}, & \mbox{\ if \ }j=3,
\end{array}
\right.
\end{equation*}
for $i\in [0: m)$.
The repair matrices are defined by
\begin{equation*}
R_{4i+j}=\left\{
\begin{array}{ll}
\begin{pmatrix}
1 & a&&\\
&&1& 1
\end{pmatrix}, &\mbox{\ if\ } j=0, \\[12pt]
\begin{pmatrix}
a &1&&\\
&&1& b
\end{pmatrix}, & \mbox{\ if \ } j=1,
\\[12pt]
\begin{pmatrix}
0 & 1&&\\
&&0& 1
\end{pmatrix}, & \mbox{\ if\ } j=2,
\\[12pt]
\begin{pmatrix}
1& 0&&\\
&&1& 0
\end{pmatrix}, & \mbox{\ if \ }j=3,
\end{array}
\right.
\end{equation*}
for $i\in [0: m)$.
\end{Construction}

\begin{Theorem}\label{Thm_RB_condition}
The average normalized repair bandwidth and average normalized rebuilding access of the array code in Construction \ref{Con1} are
\begin{equation*}
\gamma^{Nor}_{Ave}=\frac{5n-8}{8n-16} \mbox{\ and\ }\Gamma^{Nor}_{Ave}=\frac{13n-16}{16n-32}
\end{equation*}
if the array code is constructed over the finite field $\mathbf{F}_q$ with $q=2^{2t}$ for some positive integer $t$, and $a=w^{\frac{q-1}{3}}$, $b=w^{\frac{2(q-1)}{3}}$. This implies that the average normalized repair bandwidth is asymptotically optimal w.r.t. the lower bound in \eqref{Eqn_RB_bound_no}.
\end{Theorem}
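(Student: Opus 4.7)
The plan is three-fold: (i) verify that Construction \ref{Con1} yields a valid MDS code whose repair matrices satisfy \eqref{repair_node_requirement1 n-1}; (ii) for each pair $(j, j') \in \{0,1,2,3\}^2$, compute the rank and nonzero-column count of the $2 \times 2$ interference block $R_{4i+j}\binom{I}{A_{4i'+j'}}$; and (iii) sum over all helpers, divide by $nkN$, and take the $n \to \infty$ limit. The central simplification is that with $a = w^{(q-1)/3}$ and $b = w^{2(q-1)/3}$ in characteristic $2$, the set $\{1, a, b\}$ is precisely the cube roots of unity, so $a^2 = b$, $b^2 = a$, $ab = 1$, and $a + b = 1$. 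Writing $A_{4i+j} = w^i M_j$ and decomposing $R_{4i+j} = (P_j \mid Q_j)$ into two $2 \times 2$ blocks reduces the interference matrix to $P_j + w^{i'} Q_j M_{j'}$, which depends only on $j, j', i'$ and cuts the enumeration down to $4 \times 4 \times m$ cases.

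For the MDS property I would check that $M_j - \alpha M_{j'}$ is nonsingular for every $\alpha = w^{i'-i}$ with $i \ne i' \in [0 : m)$ (and arbitrary such $\alpha$ when $j \ne j'$). The diagonal cases reduce to $\alpha \ne 1$. The off-diagonal cases with $\{j, j'\} \ne \{2, 3\}$ yield a determinant factoring, up to units, as $(1 - \alpha a)(1 - \alpha b)$, which vanishes iff $\alpha \in \{a, b\}$. The remaining case $(j, j') = (2, 3)$ produces $ab(1 + \alpha + \alpha^2)$, exactly the quantity controlled by Lemma \ref{lem}. The hypothesis $q \ge 3n/4 + 1 = 3m + 1$ forces $m \le (q-1)/3$, so $i' - i \bmod (q - 1)$ lies in $[1, m-1] \cup [q-m, q-2]$, a set that avoids both $(q-1)/3$ and $2(q-1)/3$; hence $\alpha \notin \{a, b\}$ in every case, and Lemma \ref{lem} disposes of the $(2, 3)$ exception. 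The repair conditions $\mathrm{rank}(R_{4i+j}\binom{I}{A_{4i+j}}) = 2$ then become four direct $2 \times 2$ determinant checks, each yielding a nonzero multiple of $w^i$.

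For the cost computation I would evaluate $P_j + w^{i'} Q_j M_{j'}$ for all $(j, j')$. The identities $a^2 = b$, $ab = 1$, $a + b = 1$ force every off-diagonal block $(j' \ne j)$ to be singular (rank $1$), while $j' = j$ yields rank $2$. The nonzero-column count splits the four node types: for $j \in \{0, 1\}$ the first row of $P_j$ is fully supported, so $N_c = 2$ uniformly; for $j \in \{2, 3\}$ the first row of $P_j$ has only one nonzero entry, giving $N_c = 1$ on the three rank-$1$ helpers per group and $N_c = 2$ on the unique rank-$2$ helper. Consequently every node has bandwidth $3 + 5(m - 1) = 5m - 2$, while the per-node access is $3\cdot 2 + (m-1)\cdot 8 = 8m - 2$ for $j \in \{0, 1\}$ and $3 + 5(m-1) = 5m - 2$ for $j \in \{2, 3\}$. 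Averaging and dividing by $nkN = 8m(4m - 2)$ produces $\gamma^{Nor}_{Ave} = (5n-8)/(8n-16)$ and $\Gamma^{Nor}_{Ave} = (13n-16)/(16n-32)$, and letting $n \to \infty$ gives the limit $5/8$ matching the lower bound in \eqref{Eqn_RB_bound_no}.

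The main obstacle I anticipate is disciplined case bookkeeping: sixteen helper blocks must be handled consistently in characteristic $2$, and their rank and column tallies carried through to the final averages without error. The one conceptually nontrivial step is the MDS verification at $(j, j') = (2, 3)$, where Lemma \ref{lem} together with the field-size hypothesis excludes primitive cube roots from the relevant exponent range; everything else collapses routinely once the cube-root identities $a^2 = b$, $ab = 1$, $a + b = 1$ are applied.
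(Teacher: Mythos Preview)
Your computation of the repair bandwidth and rebuilding access is correct and follows essentially the same route as the paper: you enumerate the sixteen interference blocks $R_{4i+j}\binom{I}{A_{4i'+j'}}$, use the cube-root identities $a^2=b$, $ab=1$, $a+b=1$ in characteristic~$2$ to force rank~$1$ whenever $j'\ne j$ and rank~$2$ when $j'=j$, read off the column counts, and average. Your block factorization $A_{4i'+j'}=w^{i'}M_{j'}$, $R_{4i+j}=(P_j\mid Q_j)$, and the resulting form $P_j+w^{i'}Q_jM_{j'}$ is a tidy organizational device that the paper does not spell out, but the underlying case analysis and the final tallies $\gamma_{4i+j}=5m-2$, $\Gamma_{4i+j}\in\{8m-2,\,5m-2\}$ are identical.

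The one point to flag is scope: Theorem~\ref{Thm_RB_condition} asserts only the repair-bandwidth and rebuilding-access values, not the MDS property, and its hypothesis is merely $q=2^{2t}$ with the specified $a,b$. Your part~(i), together with the additional field-size assumption $q\ge 3m+1$ and the appeal to Lemma~\ref{lem} for the $(j,j')=(2,3)$ determinant, is exactly the content of the paper's separate Theorems~\ref{Thm-MDS-condition} and~\ref{Thm_MDS_RB}, not of Theorem~\ref{Thm_RB_condition}. It is not wrong to include it, but you are proving more than the statement requires and importing a hypothesis that the theorem does not carry; in the paper the MDS verification is deliberately decoupled.
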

\begin{proof}
Consider the repair of node $4i+j$, where $i\in [0: m)$ and $j\in [0: 4)$.
\begin{itemize}
\item [i)] When $j=0$, for $i, i'\in [0: m)$, we have

\begin{align*}
R_{4i}\begin{pmatrix}
I\\
A_{4i'}
\end{pmatrix}&=\begin{pmatrix}
1 &a &&\\
& &1& 1
\end{pmatrix}\begin{pmatrix}
1 & 0\\
0& 1\\
w^{i'} & 0\\
0& w^{i'}
\end{pmatrix}
=\begin{pmatrix}
1 & a \\
w^{i'}&w^{i'}
\end{pmatrix},
\end{align*}
which is of full rank if and only if
\begin{equation}\label{Eqn_a ne 1}
a\ne 1.
\end{equation}

\begin{align*}
R_{4i}\begin{pmatrix}
I\\
A_{4i'+1}
\end{pmatrix}&=\begin{pmatrix}
1 &a &&\\
& &1& 1
\end{pmatrix}\begin{pmatrix}
1 & 0\\
0& 1\\
w^{i'}a & 0\\
0& w^{i'}b
\end{pmatrix}=\begin{pmatrix}
1 & a \\
w^{i'}a&w^{i'}b
\end{pmatrix},
\end{align*}
which has rank $1$ if and only if
\begin{equation}\label{Eqn_b=aa}
b=a^2.
\end{equation}

\begin{align*}
R_{4i}\begin{pmatrix}
I\\
A_{4i'+2}
\end{pmatrix}&=\begin{pmatrix}
1 &a &&\\
& &1& 1
\end{pmatrix}\begin{pmatrix}
1 & 0\\
0& 1\\
w^{i'}b & 0\\
w^{i'}a& w^{i'}a
\end{pmatrix}=\begin{pmatrix}
1 & a \\
w^{i'}(a+b)&w^{i'}a
\end{pmatrix},
\end{align*}
which has rank $1$ if and only if
\begin{equation}\label{Eqn_a+b=1}
a+b=1
\end{equation}

\begin{align*}
R_{4i}\begin{pmatrix}
I\\
A_{4i'+3}
\end{pmatrix}&=\begin{pmatrix}
1 &a &&\\
& &1& 1
\end{pmatrix}\begin{pmatrix}
1 & 0\\
0& 1\\
w^{i'}b &w^{i'}b\\
0& w^{i'}a
\end{pmatrix}=\begin{pmatrix}
1 & a \\
w^{i'}b&w^{i'}(a+b)
\end{pmatrix},
\end{align*}
which has rank $1$ if and only if
\begin{equation}\label{Eqn_ab=a+b}
ab=a+b.
\end{equation}

Therefore, if
\eqref{Eqn_a ne 1}--\eqref{Eqn_ab=a+b} are satisfied, the repair bandwidth and rebuilding access of node $4i$ are given by
\begin{align*}
\gamma_{4i}&=\sum_{j=0,j\ne 4i}^{n-1}\mbox{rank}\left(R_{4i}\left(
\begin{array}{c}
I \\
A_{j}
\end{array}
\right)\right)\\
&=\sum_{i'=0,i'\ne i}^{m-1}\mbox{rank}\left(R_{4i}\left(
\begin{array}{c}
I \\
A_{4i'}
\end{array}
\right)\right)+\sum_{i'=0}^{m-1}\sum_{j'=1}^{3}\mbox{rank}\left(R_{4i}\left(
\begin{array}{c}
I \\
A_{4i'+j'}
\end{array}
\right)\right)\\
&=2(m-1)+3m\\
&=5m-2
\end{align*}
and
\begin{equation*}
\Gamma_{4i}=\sum_{j=0,j\ne 4i}^{n-1}N_c\left(R_{4i}\left(
\begin{array}{c}
I \\
A_{j}
\end{array}
\right)\right)=8m-2.
\end{equation*}

\item [ii)] When $j=1$, for $i, i'\in [0: m)$, we have

\begin{align*}
R_{4i+1}\begin{pmatrix}
I\\
A_{4i'}
\end{pmatrix}&=\begin{pmatrix}
a &1 &&\\
& &1& b
\end{pmatrix}\begin{pmatrix}
1 & 0\\
0& 1\\
w^{i'} & 0\\
0& w^{i'}
\end{pmatrix}=\begin{pmatrix}
a & 1 \\
w^{i'}&w^{i'}b
\end{pmatrix},
\end{align*}
which has rank 1 if and only if
\begin{equation}\label{Eqn_ab=1}
ab=1.
\end{equation}

\begin{align*}
R_{4i+1}\begin{pmatrix}
I\\
A_{4i'+1}
\end{pmatrix}&=\begin{pmatrix}
a &1 &&\\
& &1& b
\end{pmatrix}\begin{pmatrix}
1 & 0\\
0& 1\\
w^{i'}a & 0\\
0& w^{i'}b
\end{pmatrix}=\begin{pmatrix}
a & 1 \\
w^{i'}a&w^{i'}b^2
\end{pmatrix},
\end{align*}
which is of full rank if and only if
\begin{equation}\label{Eqn_bb ne 1}
b^2\ne 1.
\end{equation}

\begin{align*}
R_{4i+1}\begin{pmatrix}
I\\
A_{4i'+2}
\end{pmatrix}&=\begin{pmatrix}
a &1 &&\\
& &1& b
\end{pmatrix}\begin{pmatrix}
1 & 0\\
0& 1\\
w^{i'}b & 0\\
w^{i'}a& w^{i'}a
\end{pmatrix}=\begin{pmatrix}
a & 1 \\
w^{i'}(b+ab)&w^{i'}ab
\end{pmatrix},
\end{align*}
which has rank $1$ if and only if
\begin{equation}\label{Eqn_aa=a+1}
a^2=a+1.
\end{equation}

\begin{align*}
R_{4i+1}\begin{pmatrix}
I\\
A_{4i'+3}
\end{pmatrix}&=\begin{pmatrix}
a &1 &&\\
& &1& b
\end{pmatrix}\begin{pmatrix}
1 & 0\\
0& 1\\
w^{i'}b &w^{i'}b\\
0& w^{i'}a
\end{pmatrix}=\begin{pmatrix}
a & 1 \\
w^ib&w^i(b+ab)
\end{pmatrix},
\end{align*}
which has rank $1$ if and only if
\begin{equation}\label{Eqn_aa+a=1}
a^2+a=1
\end{equation}
holds.

Similarly, if
\eqref{Eqn_ab=1}--\eqref{Eqn_aa+a=1} are satisfied, the repair bandwidth and rebuilding access of node $4i+1$ are respectively
\begin{align*}
\gamma_{4i+1}=5m-2
\end{align*}
and
\begin{equation*}
\Gamma_{4i+1}=8m-2.
\end{equation*}

\item [iii)] When $j=2,3$, for $i,i'\in [0: m)$, it is easy to obtain

\begin{align*}
N_c\left(R_{4i+j}\begin{pmatrix}
I\\
A_{4i'+j'}
\end{pmatrix}\right)
&=\mbox{rank}\left(R_{4i+j}\begin{pmatrix}
I\\
A_{4i'+j'}
\end{pmatrix}\right)=\left\{
\begin{array}{ll}
1, &\mbox{\ if\ } j'\ne j, \\[12pt]
2, & \mbox{\ if \ } j'=j.
\end{array}
\right.
\end{align*}
Thus, the repair bandwidth and rebuilding access of node $4i+j$ are
\begin{align*}
\gamma_{4i+j}=\Gamma_{4i+j}=5m-2, j=2,3.
\end{align*}
\end{itemize}

Note that if \eqref{Eqn_a ne 1}--\eqref{Eqn_aa+a=1} are satisfied, then $2\mid q$, $a\ne1$, $a^3=1$, $b=a^2$, and $a^2+a=1$.
These conditions can be satisfied if $a$ with $a\ne 1$ is a cubic root of unity in $\mathbf{F}_{2^{2t}}$ for some positive integer $t$. Specifically, \eqref{Eqn_a ne 1}--\eqref{Eqn_aa+a=1} hold if $a=w^{\frac{q-1}{3}}$ and $b=w^{\frac{2(q-1)}{3}}$ and $q=2^{2t}$.

With the above repair strategy, a direct calculation shows that
$\gamma^{Nor}_{Ave}=\frac{5n-8}{8n-16}$ and $\Gamma^{Nor}_{Ave}=\frac{13n-16}{16n-32}$, where $\gamma^{Nor}_{Ave}=\frac{5k+2}{8k}$ is asymptotically optimal w.r.t. the lower bound in \eqref{Eqn_RB_bound_no}.
\end{proof}

\begin{Theorem}\label{Thm-MDS-condition}
The array code in Construction \ref{Con1} is MDS if
\begin{itemize}
\item [i)] $a,b\not\in \{w^{t}|t\in [0:m)\cup [q-m:q-1)\}$;
\item [ii)] $a\ne w^tb$ for $t\in [-m+1: m)$;
\item [iii)] $w^{2i}+w^{2i'}-w^iw^{i'}\ne 0$ for $i,i'\in [0: m)$.
\end{itemize}
\end{Theorem}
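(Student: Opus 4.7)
My plan is to verify that, under conditions (i)--(iii), the $2\times 2$ difference matrix $A_{4i+j}-A_{4i'+j'}$ is nonsingular for every pair $(i,j)\ne(i',j')$ with $i,i'\in[0:m)$ and $j,j'\in[0:4)$; as noted following \eqref{Eqn parity check eq}, this is exactly the MDS criterion for the codes considered here. I would proceed by a finite case analysis over the ten multiset types $(j,j')$ with $j\le j'$, computing $\det(A_{4i+j}-A_{4i'+j'})$ in each case.

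The four ``same-type'' cases $j=j'$ are easiest: each difference factors as $(w^i-w^{i'})M_j$, where $M_j$ has determinant $1$ for $j=0$ and $ab$ for $j\in\{1,2,3\}$. Non-vanishing therefore reduces to $w^i\ne w^{i'}$ for distinct $i,i'\in[0:m)$ together with $ab\ne 0$; the former holds as soon as $m<q-1$ (automatic from the field-size bounds accompanying the construction), and the latter follows from condition (i) since $0\notin\{w^t\}$.

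For the mixed cases with $j=0$ and $j'\in\{1,2,3\}$, the matrix $A_{4i'+j'}$ is either diagonal or lower-triangular, so the determinant simplifies to $(w^i-w^{i'}a)(w^i-w^{i'}b)$; forcing this to be nonzero for all $i,i'\in[0:m)$ amounts to $a,b\notin\{w^{i-i'}:i,i'\in[0:m)\}$, and translating negative exponents via $w^{q-1}=1$ shows this is exactly the set forbidden by condition (i). The cases $j=1$ and $j'\in\{2,3\}$ yield, after an analogous triangularization, the determinant $(w^ia-w^{i'}b)(w^ib-w^{i'}a)$, whose non-vanishing is equivalent to $a\ne w^tb$ for $t\in[-m+1,m-1]$, i.e., condition (ii) (the range $[-m+1:m)$ covers the $i=i'$ instance $a\ne b$).

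The main obstacle is the remaining case $(j,j')=(2,3)$, where neither block is triangular. I would compute the determinant by direct expansion, obtaining
\begin{equation*}
\det\!\left(w^i\begin{pmatrix}b&0\\a&a\end{pmatrix}-w^{i'}\begin{pmatrix}b&b\\0&a\end{pmatrix}\right)=ab\bigl[(w^i-w^{i'})^2+w^iw^{i'}\bigr]=ab\bigl(w^{2i}+w^{2i'}-w^iw^{i'}\bigr),
\end{equation*}
so that for $i=i'$ the bracket collapses to $w^{2i}\ne 0$, while the case $i\ne i'$ is precisely governed by condition (iii). Assembling the four groups of cases then establishes the theorem.
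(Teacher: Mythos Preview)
Your approach is essentially identical to the paper's: the same case analysis over pairs $(j,j')$, the same determinant factorizations, and the same identification of each case with one of conditions (i)--(iii). Two small slips to fix: $A_{4i'+3}$ is upper (not lower) triangular, though this is harmless since $A_{4i}$ is scalar so the difference is still triangular; and $ab\ne 0$ does \emph{not} follow from condition (i) (that $0$ lies outside the forbidden set says nothing about whether $a$ or $b$ equals $0$)---it follows instead from the construction's standing hypothesis $a,b\in\mathbf{F}_q\setminus\{0,1\}$.
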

\begin{proof}
It suffices to analyze the nonsingularity of $A_{4i+j}-A_{4i'+j'}$ for
$i,i'\in [0: m)$ and $j,j'\in [0: 4)$ with $(i, j)\ne (i', j')$, which can be proceeded according to the following four cases.
\begin{itemize}
\item [i)] When $j=j'$ and $i\ne i'$,
\begin{align*}
\det(A_{4i+j}-A_{4i'+j'})&= (1-w^{i'-i})\det(A_{4i+j})\ne 0.
\end{align*}

\item [ii)] When $j=0,j'=1,2,3$,
\begin{align*}
\det(A_{4i+j}-A_{4i'+j'})=(w^i-w^{i'}a)(w^i-w^{i'}b),
\end{align*}
which is nonzero if $a,b\ne w^{i-i'}$, i.e., $a,b\not\in \{w^{t}|t\in [0:m)\cup [q-m:q-1)\}$.

\item [iii)] When $j=1,j'=2,3$,
\begin{equation*}
\det(A_{4i+j}-A_{4i'+j'})=(w^ia-w^{i'}b)(w^ib-w^{i'}a),
\end{equation*}
which is nonzero if $a\ne w^{i'-i}b, w^{i-i'}b$, i.e., $a\ne w^tb$ for $t\in [-m+1: m)$.

\item [iv)] When $j=2,j'=3$,
\begin{align*}
&\det(A_{4i+j}-A_{4i'+j'})= \det \begin{pmatrix}
w^ib-w^{i'}b & -w^{i'}b\\
w^ia& w^ia-w^{i'}a
\end{pmatrix}= ab \det \begin{pmatrix}
w^i-w^{i'} & -w^{i'}\\
w^i& w^i-w^{i'}
\end{pmatrix}= ab(w^{2i}+w^{2i'}-w^iw^{i'}),
\end{align*}
which is nonzero if $a, b\ne 0$ and $w^{2i}+w^{2i'}-w^iw^{i'}\ne 0$ for $i,i'\in [0: m)$.
\end{itemize}
\end{proof}

\begin{Theorem}\label{Thm_MDS_RB}
The $(n=4m, k=4m-2)$ array code in Construction \ref{Con1} is an MDS array code over $\mathbf{F}_q$ with an
average normalized repair bandwidth of $\gamma^{Nor}_{Ave}=\frac{5n-8}{8n-16}$ and an average normalized rebuilding access of $\Gamma^{Nor}_{Ave}=\frac{13n-16}{16n-32}$
if
$q\ge 3m+1$ and $q=2^{2t}$ for some positive integer $t$.
\end{Theorem}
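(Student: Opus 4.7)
The plan is to combine Theorem \ref{Thm_RB_condition} and Theorem \ref{Thm-MDS-condition}. Set $a = w^{(q-1)/3}$ and $b = w^{2(q-1)/3}$; since $q = 2^{2t}$ we have $3 \mid q-1$, so these exponents are integers and $a \ne 1$ is a primitive cube root of unity with $b = a^2$. Theorem \ref{Thm_RB_condition} then immediately delivers the stated values $\gamma^{Nor}_{Ave} = \frac{5n-8}{8n-16}$ and $\Gamma^{Nor}_{Ave} = \frac{13n-16}{16n-32}$. What remains is to verify that, under the assumption $q \ge 3m+1$, the three conditions of Theorem \ref{Thm-MDS-condition} are satisfied for this choice of $a,b$, which will give the MDS property.

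For condition (i), I would show that the exponents $(q-1)/3$ and $2(q-1)/3$ both lie in the safe interval $[m,\,q-m)$. The lower bound $(q-1)/3 \ge m$ is exactly $q \ge 3m+1$, and the upper bound $2(q-1)/3 \le q-m-1$ rearranges to $3m \le q+1$, which again follows from $q \ge 3m+1$. For condition (ii), the inequality $a \ne w^t b$ rearranges to $a/b = w^{-(q-1)/3} = w^{2(q-1)/3} \ne w^t$ in $\mathbf{F}_q^*$, so the requirement for $t \in [-m+1,m)$ is that $2(q-1)/3$ avoids $[0,m) \cup [q-m,q-1) \pmod{q-1}$, which is exactly the statement already established for $b$ in condition (i).

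Condition (iii) is the main obstacle, and I would handle it via Lemma \ref{lem}. Since $\mathrm{char}(\mathbf{F}_q) = 2$, rewrite
\begin{equation*}
w^{2i} + w^{2i'} - w^i w^{i'} = w^{2i} + w^{2i'} + w^i w^{i'}.
\end{equation*}
Assuming without loss of generality $i \ge i'$, factor out $w^{2i'}$ to get $w^{2i'}(w^{2j} + w^j + 1)$ with $j = i - i' \in [0, m)$. The bound $q \ge 3m+1$ gives $j \le m-1 < m \le (q-1)/3$, so Lemma \ref{lem} yields $w^{2j} + w^j + 1 \ne 0$, and the whole expression is nonzero.

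With all three conditions of Theorem \ref{Thm-MDS-condition} verified under $q \ge 3m+1$ and $q = 2^{2t}$, the code is MDS; combined with Theorem \ref{Thm_RB_condition}, this completes the proof. The only subtlety worth double-checking is the boundary case $q = 3m+1$ in condition (iii), where $(q-1)/3 = m$ and $j$ only reaches $m-1$, so Lemma \ref{lem} still applies with strict inequality.
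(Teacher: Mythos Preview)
Your proposal is correct and follows essentially the same route as the paper's proof: set $a=w^{(q-1)/3}$, $b=w^{2(q-1)/3}$, invoke Theorem~\ref{Thm_RB_condition} for the bandwidth/access values, and verify conditions (i)--(iii) of Theorem~\ref{Thm-MDS-condition} via exponent-interval bounds and Lemma~\ref{lem} (your treatment of (ii), reducing $a/b=b$ to the bound already shown for $b$ in (i), is in fact more explicit than the paper's one-line ``clearly''). One small arithmetic slip: the inequality $2(q-1)/3\le q-m-1$ actually rearranges to $q\ge 3m+1$, not to $3m\le q+1$; this does not affect the argument since $q\ge 3m+1$ is precisely the hypothesis.
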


\begin{proof}
By Theorems \ref{Thm_RB_condition} and \ref{Thm-MDS-condition},
it suffices to prove that i)--iii) of Theorem \ref{Thm-MDS-condition} hold under $a=w^{\frac{q-1}{3}}$ and $b=w^{\frac{2(q-1)}{3}}$.

If $q\ge 3m+1$, then we have,
\begin{equation*}
m\le \frac{q-1}{3}<q-m, m\le \frac{2(q-1)}{3}<q-m,
\end{equation*}
which implies that Theorem \ref{Thm-MDS-condition}-i) is satisfied.

Clearly, Theorem \ref{Thm-MDS-condition}-ii) holds since $a=w^{\frac{q-1}{3}}$ and $b=w^{\frac{2(q-1)}{3}}$.

Now we verify Theorem \ref{Thm-MDS-condition}-iii). For $0\le i\le i'<m$,
\begin{equation*}
w^{2i}+w^{2i'}-w^iw^{i'}=w^{2i}(1+w^{2(i'-i)}+w^{i'-i}),
\end{equation*}
which is nonzero by Lemma \ref{lem} since $0\le i'-i<m\le \frac{q-1}{3}$. The statement holds similarly for $0\le i'<i<m$. Thus, Theorem \ref{Thm-MDS-condition}-iii) is satisfied.
\end{proof}

\section{A Construction of MDS Array Codes With Two Repair Strategies}\label{sec:C2}
In this section, we present another construction of $(n=k+2, k)$ MDS array codes that incorporates two repair strategies.

\begin{Construction}\label{Con2}
Let $w$ be a primitive element of the finite field $\mathbf{F}_q$, we construct an $(n=3m,k=3m-2)$ array code with a sub-packetization level of $2$ over $\mathbf{F}_q$. The parity-check matrix is defined by $A$ in \eqref{Eqn parity check eq} and
\begin{equation}\label{Eqn_Con2_blocks}
A_{3i+j}=\left\{
\begin{array}{ll}
\begin{pmatrix}
\lambda_i & -1\\
0& \lambda_i+1
\end{pmatrix}, &\mbox{\ if\ } j=0, \\[12pt]
\begin{pmatrix}
\lambda_i & 0\\
1& \lambda_i+1
\end{pmatrix}, & \mbox{\ if \ } j=1,
\\[12pt]
\begin{pmatrix}
\lambda_i+1 & 0\\
0& \lambda_i
\end{pmatrix}, & \mbox{\ if\ } j=2,
\\[12pt]
\end{array}
\right.
\end{equation}
for $i\in [0: m)$.
\end{Construction}

\begin{Theorem}\label{Thm_C2_MDS}
The $(n=3m,k=3m-2)$ array code in Construction \ref{Con2} is MDS if the following conditions are met:
\begin{itemize}
\item [i)] $\lambda_i\ne \lambda_{i'}$ for $0\le i<i'<m$;
\item [ii)] $\lambda_i-\lambda_{i'}\ne \pm1$ for $i,i'\in [0: m)$.
\end{itemize}
Furthermore, these requirements can be satisfied over a finite field $\mathbf{F}_q$ with
\begin{equation*}
q\ge\left\{
\begin{array}{ll}
\frac{2n}{3}, & \mbox{\ if \ } 2\mid q,
\\[12pt]
n, & \mbox{\ otherwise\ }.
\end{array}
\right.
\end{equation*}
\end{Theorem}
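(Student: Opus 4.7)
The plan is to recall from Section~\ref{sec:pre} that the MDS property of the array code defined by \eqref{Eqn parity check eq} is equivalent to the non-singularity of the block-differences $A_u-A_v$ for all $u\ne v$ in $[0:n)$, and then to verify this condition blockwise using the structure in \eqref{Eqn_Con2_blocks}. Writing $u=3i+j$ and $v=3i'+j'$ with $i,i'\in[0:m)$ and $j,j'\in[0:3)$, I would split the analysis into the four essentially different cases $j=j'$, $(j,j')=(0,1)$, $(j,j')=(0,2)$, and $(j,j')=(1,2)$ (the reverse orders follow by transposition/symmetry of determinants).

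For the case $j=j'$ with $i\ne i'$, all three block types in \eqref{Eqn_Con2_blocks} have the property that the $\lambda_i$-dependent entries appear only on the diagonal with a common coefficient $1$, so $A_{3i+j}-A_{3i'+j}$ is the scalar matrix $(\lambda_i-\lambda_{i'})I$ and non-singularity is precisely condition~(i). For each of the three cross-type cases I would compute the $2\times 2$ determinant directly: a short calculation shows that in every one of them the determinant reduces to
\begin{equation*}
\det(A_{3i+j}-A_{3i'+j'})=(\lambda_i-\lambda_{i'})^2-1=(\lambda_i-\lambda_{i'}-1)(\lambda_i-\lambda_{i'}+1),
\end{equation*}
which is nonzero exactly when $\lambda_i-\lambda_{i'}\ne\pm1$, i.e.\ condition~(ii). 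Combining the two cases establishes the sufficiency of (i) and (ii).

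For the second assertion, I would argue by a greedy selection that $m$ values $\lambda_0,\ldots,\lambda_{m-1}\in\mathbf{F}_q$ satisfying (i) and (ii) can always be produced under the stated field-size hypotheses. In odd characteristic, once $i$ values have been chosen, each of them rules out at most three candidates (itself and its two $\pm1$-neighbours), so at most $3i$ elements of $\mathbf{F}_q$ are forbidden when the $(i+1)$-th value is selected; hence $m$ valid values exist whenever $3(m-1)<q$, which is implied by $q\ge n=3m$. In characteristic $2$ the identifications $-1=1$ collapse the two forbidden neighbours into one, so each chosen value rules out at most two candidates, and the same greedy argument needs only $2(m-1)<q$, which follows from $q\ge 2n/3=2m$.

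I do not expect a serious obstacle: the case analysis is purely computational and the key identity $(\lambda_i-\lambda_{i'})^2-1$ appears uniformly in all cross-type cases, which makes condition~(ii) natural; the only care required is to check each of the three off-diagonal pairings separately and to handle the characteristic-$2$ case in the greedy step, where the looser bound on the field size reflects the fusion of the two forbidden values $\pm1$.
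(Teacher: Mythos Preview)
Your proposal is correct and follows essentially the same route as the paper: the same four-case determinant analysis (yielding $(\lambda_i-\lambda_{i'})^2-1$ in every cross-type case) and a greedy selection to realize the $\lambda_i$'s within the stated field size. The only cosmetic difference is that in characteristic $2$ the paper instead partitions $\mathbf{F}_q$ into sets $S,T$ with $a\in S\Rightarrow a+1\in T$ and picks the $\lambda_i$ from $S$, which is equivalent to your greedy argument with the $\pm1$ neighbours collapsed.
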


\begin{proof}
It suffices to analyze the nonsingularity of $A_{3i+j}-A_{3i'+j'}$ for
$i,i'\in [0: m)$ and $j,j'\in [0: 3)$ with $(i, j)\ne (i', j')$, which can be proceeded according to the following four cases.
\begin{itemize}
\item []Case 1: When $j=j'$ and $i\ne i'$, if $j=0$,
\begin{align*}
\det(A_{3i+j}-A_{3i'+j'})&= \det\begin{pmatrix}
\lambda_i-\lambda_{i'} & 0\\
0& \lambda_i-\lambda_{i'}
\end{pmatrix}\ne 0
\end{align*}
if and only if $\lambda_i\ne \lambda_{i'}$. The same argument applies for $j=1,2$.

\item []Case 2: When $j=0,j'=1$,
\begin{align*}
&\det(A_{3i+j}-A_{3i'+j'})= \det \begin{pmatrix}
\lambda_i-\lambda_{i'} & -1\\
-1& \lambda_i-\lambda_{i'}
\end{pmatrix}=(\lambda_i-\lambda_{i'})^2-1\ne 0,
\end{align*}
which is nonzero if $\lambda_i-\lambda_{i'}\ne \pm1$.

\item []Case 3: When $j=0,j'=2$,
\begin{align*}
&\det(A_{3i+j}-A_{3i'+j'})= \det \begin{pmatrix}
\lambda_i-\lambda_{i'}-1 & -1\\
0& \lambda_i-\lambda_{i'}+1
\end{pmatrix}=(\lambda_i-\lambda_{i'})^2-1,
\end{align*}
which is nonzero if $\lambda_i-\lambda_{i'}\ne \pm1$.

\item []Case 4: When $j=1,j'=2$,
\begin{align*}
&\det(A_{3i+j}-A_{3i'+j'})=\det \begin{pmatrix}
\lambda_i-\lambda_{i'}-1 & 0\\
1& \lambda_i-\lambda_{i'}+1
\end{pmatrix}=(\lambda_i-\lambda_{i'})^2-1,
\end{align*}
which is nonzero if $\lambda_i-\lambda_{i'}\ne \pm1$.
\end{itemize}
Now, let us determine the required finite field size. If the underlying finite field has characteristic two, then a finite field with size $q\ge \frac{2n}{3}=2m$ is sufficient. Since we can partition the finite field into two disjoint sets $S$ and $T$ such that for any $a\in S$ we have $a+1\in T$. We can then select pairwise distinct $\lambda_0, \lambda_1, \ldots, \lambda_{m-1}$ from the set $S$ to fulfill conditions i) and ii).

If the underlying finite field has an odd characteristic, a field size $q\ge n=3m$ is sufficient. Initially, we can choose $\lambda_0$ to be an arbitrary element in $\mathbf{F}_q$. For $1\le i<m$, we can select $\lambda_i$ to be any element in
\begin{equation*}
\mathbf{F}_q\backslash\{\lambda_{j},\lambda_{j}-1,\lambda_{j}+1|0\le j<i\}.
\end{equation*}
With these assignments, i) and ii) are fulfilled.
\end{proof}

For the array code in Construction \ref{Con2}, we have two repair strategies for nodes $3i+2$ for $i\in [0: m)$. One strategy aims to minimize the repair bandwidth, while the other focuses on minimizing the rebuilding access. We first present the strategy that minimizes the repair bandwidth of nodes $3i+2$ for $i\in [0: m)$ and demonstrate the average normalized repair bandwidth and average normalized rebuilding access of the array code in Construction \ref{Con2} under this strategy.

\begin{Theorem}\label{C2_1st_repair}
The nodes of Construction \ref{Con2} can be repaired with
an average normalized repair bandwidth of $\gamma^{Nor}_{Ave}=\frac{2n-3}{3n-6}$ and an average normalized rebuilding access of $\Gamma^{Nor}_{Ave}=\frac{7n-9}{9n-18}$.
\end{Theorem}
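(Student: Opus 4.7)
The plan is to exhibit explicit binary $2\times 4$ repair matrices $R_{3i+j}$ that are independent of $i$, and then verify by a short computation that the resulting rank and nonzero-column counts sum to the stated averages. For node $3i$ I would take
\[
R_{3i}=\begin{pmatrix}1&0&0&0\\ 0&0&1&0\end{pmatrix},
\]
which selects the first row of the upper $I$-block together with the first row of the lower $A$-block. Since the first rows of $A_{3i'+1}$ and $A_{3i'+2}$ are $(\lambda_{i'},0)$ and $(\lambda_{i'}+1,0)$ respectively, the product $R_{3i}\binom{I}{A_j}$ has a vanishing second column when $j\not\equiv 0\pmod 3$, giving rank $1$ and $N_c=1$; when $j\equiv 0\pmod 3$ the product is upper triangular with nonzero antidiagonal entry $-1$, giving rank $2$ and $N_c=2$, which in particular verifies recoverability at $j=3i$. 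The analogous choice $R_{3i+1}=\left(\begin{smallmatrix}0&1&0&0\\ 0&0&0&1\end{smallmatrix}\right)$ handles type-$1$ nodes identically with rows swapped, so that $\gamma_{3i}=\Gamma_{3i}=\gamma_{3i+1}=\Gamma_{3i+1}=2(m-1)+m+m=4m-2$.

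The nontrivial case is $j=2$: since $A_{3i+2}$ is diagonal, the triangular trick no longer works. Here I would instead use
\[
R_{3i+2}=\begin{pmatrix}1&1&0&0\\ 0&0&1&1\end{pmatrix},
\]
which turns the second row of $R_{3i+2}\binom{I}{A_j}$ into the sum of the two rows of $A_j$. Recoverability follows from $\det R_{3i+2}\binom{I}{A_{3i+2}}=\det\left(\begin{smallmatrix}1&1\\ \lambda_i+1&\lambda_i\end{smallmatrix}\right)=-1$. Direct computation then gives $R_{3i+2}\binom{I}{A_{3i'}}=\left(\begin{smallmatrix}1&1\\ \lambda_{i'}&\lambda_{i'}\end{smallmatrix}\right)$ and $R_{3i+2}\binom{I}{A_{3i'+1}}=\left(\begin{smallmatrix}1&1\\ \lambda_{i'}+1&\lambda_{i'}+1\end{smallmatrix}\right)$, both of rank $1$, while $R_{3i+2}\binom{I}{A_{3i'+2}}=\left(\begin{smallmatrix}1&1\\ \lambda_{i'}+1&\lambda_{i'}\end{smallmatrix}\right)$ is rank $2$ for $i'\ne i$. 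However, because the first row of every product is $(1,1)$, the number of nonzero columns is always $2$. This yields $\gamma_{3i+2}=1+1+(m-1)+(m-1)+2(m-1)=4m-2$ and $\Gamma_{3i+2}=2+2+2(m-1)+2(m-1)+2(m-1)=6m-2$, the gap reflecting that bandwidth is saved by downloading symbol sums rather than individual symbols.

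Summing over the three node types yields $\sum_i\gamma_i=3m(4m-2)$ and $\sum_i\Gamma_i=2m(4m-2)+m(6m-2)=m(14m-6)$; dividing by $nkN=3m\cdot(3m-2)\cdot 2$ and substituting $n=3m$ reproduces the claimed averages $\gamma^{Nor}_{Ave}=(2n-3)/(3n-6)$ and $\Gamma^{Nor}_{Ave}=(7n-9)/(9n-18)$. The main obstacle is the choice of $R_{3i+2}$: replacing the $(1,1)$ rows by any other pair destroys the rank-$1$ property on helpers of types $0$ and $1$, while abandoning the block-diagonal layout destroys recoverability at $A_{3i+2}$ itself. Once the three repair matrices above are fixed, the remainder of the argument is a routine enumeration.
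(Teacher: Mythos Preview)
Your proposal is correct and follows essentially the same approach as the paper: you use the identical repair matrices $R_{3i}$, $R_{3i+1}$, $R_{3i+2}$ and carry out the same rank/$N_c$ enumeration leading to the same totals. One cosmetic slip: $R_{3i}\binom{I}{A_{3i'}}=\left(\begin{smallmatrix}1&0\\ \lambda_{i'}&-1\end{smallmatrix}\right)$ is lower triangular with $-1$ on the diagonal (not upper triangular with $-1$ on the antidiagonal), but this does not affect the rank-$2$ conclusion.
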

\begin{proof}
Let the repair matrices be
\begin{equation*}
R_{im+j}=\left\{
\begin{array}{ll}
\begin{pmatrix}
1 &0 &0&0\\
0&0&1&0
\end{pmatrix}, &\mbox{\ if\ } j=0, \\[12pt]
\begin{pmatrix}
0 &1&0&0\\
0&0&0& 1
\end{pmatrix}, & \mbox{\ if \ } j=1,
\\[12pt]
\begin{pmatrix}
1 & 1&0&0\\
0&0&1& 1
\end{pmatrix}, & \mbox{\ if\ } j=2,
\\[12pt]
\end{array}
\right.
\end{equation*}
for $i\in [0: m)$.

\begin{itemize}
\item [i)] When $j=0$, for $i, i'\in [0: m)$, we have

\begin{align*}
R_{3i}\begin{pmatrix}
I\\
A_{3i'}
\end{pmatrix}&=\begin{pmatrix}
1 &0 &0&0\\
0&0&1&0
\end{pmatrix}\begin{pmatrix}
1 & 0\\
0& 1\\
\lambda_{i'} & -1\\
0& \lambda_{i'}+1
\end{pmatrix}
=\begin{pmatrix}
1 & 0 \\
\lambda_{i'} & -1
\end{pmatrix},
\end{align*}
which is of full rank.

\begin{align*}
R_{3i}\begin{pmatrix}
I\\
A_{3i'+1}
\end{pmatrix}&=\begin{pmatrix}
1 &0 &0&0\\
0&0&1&0
\end{pmatrix}\begin{pmatrix}
1 & 0\\
0& 1\\
\lambda_{i'} & 0\\
1& \lambda_{i'}+1
\end{pmatrix}=\begin{pmatrix}
1 & 0 \\
\lambda_{i'} & 0
\end{pmatrix},
\end{align*}
which has rank $1$.

\begin{align*}
R_{3i}\begin{pmatrix}
I\\
A_{3i'+2}
\end{pmatrix}&=\begin{pmatrix}
1 &0 &0&0\\
0&0&1&0
\end{pmatrix}\begin{pmatrix}
1 & 0\\
0& 1\\
\lambda_{i'}+1 & 0\\
0& \lambda_{i'}
\end{pmatrix}=\begin{pmatrix}
1 & 0\\
\lambda_{i'}+1 & 0
\end{pmatrix},
\end{align*}
which has rank $1$.

Thus, the repair bandwidth of node $3i$ is
\begin{align*}
\gamma_{3i}&=\sum_{j=0,j\ne 3i}^{n-1}\mbox{rank}\left(R_{3i}\left(
\begin{array}{c}
I \\
A_{j}
\end{array}
\right)\right)\\
&=\sum_{i'=0,i'\ne i}^{m-1}\mbox{rank}\left(R_{3i}\left(
\begin{array}{c}
I \\
A_{3i'}
\end{array}
\right)\right)+\sum_{i'=0}^{m-1}\sum_{j'=1}^{2}\mbox{rank}\left(R_{3i}\left(
\begin{array}{c}
I \\
A_{3i'+j'}
\end{array}
\right)\right)\\
&=2(m-1)+2m\\
&=4m-2,
\end{align*}
while
the rebuilding access is
\begin{align*}
\Gamma_{3i}&=\sum_{j=0,j\ne 3i}^{n-1}N_c\left(R_{3i}\left(
\begin{array}{c}
I \\
A_{j}
\end{array}
\right)\right)\\
&=\sum_{i'=0,i'\ne i}^{m-1}N_c\left(R_{3i}\left(
\begin{array}{c}
I \\
A_{3i'}
\end{array}
\right)\right)+\sum_{i'=0}^{m-1}\sum_{j'=1}^{2}N_c\left(R_{3i}\left(
\begin{array}{c}
I \\
A_{3i'+j'}
\end{array}
\right)\right)\\
&=2(m-1)+2m\\
&=4m-2,
\end{align*}

\item [ii)] When $j=1$, similar to the previous case, we have that the repair bandwidth and rebuilding access of node $3i+1$ are
\begin{equation*}
\gamma_{3i+1}=\Gamma_{3i+1}=4m-2.
\end{equation*}

\item [iii)] When $j=2$, for $i,i'\in [0: m)$, we have

\begin{align*}
R_{3i+2}\begin{pmatrix}
I\\
A_{3i'+2}
\end{pmatrix}&=\begin{pmatrix}
1 & 1&0&0\\
0&0&1& 1
\end{pmatrix}\begin{pmatrix}
1 & 0\\
0& 1\\
\lambda_{i'}+1 & 0\\
0& \lambda_{i'}
\end{pmatrix}
=\begin{pmatrix}
1 & 1\\
\lambda_{i'}+1 & \lambda_{i'}
\end{pmatrix},
\end{align*}
which is of full rank.

\begin{align*}
R_{3i+2}\begin{pmatrix}
I\\
A_{3i'}
\end{pmatrix}&=\begin{pmatrix}
1 & 1&0&0\\
0&0&1& 1
\end{pmatrix}\begin{pmatrix}
1 & 0\\
0& 1\\
\lambda_{i'} & -1\\
0& \lambda_{i'}+1
\end{pmatrix}=\begin{pmatrix}
1 & 1 \\
\lambda_{i'} & \lambda_{i'}
\end{pmatrix},
\end{align*}
which has rank $1$.

\begin{align*}
R_{3i+2}\begin{pmatrix}
I\\
A_{3i'+1}
\end{pmatrix}&=\begin{pmatrix}
1 & 1&0&0\\
0&0&1& 1
\end{pmatrix}\begin{pmatrix}
1 & 0\\
0& 1\\
\lambda_{i'} & 0\\
1& \lambda_{i'}+1
\end{pmatrix}=\begin{pmatrix}
1 & 1\\
\lambda_{i'}+1 & \lambda_{i'}+1
\end{pmatrix},
\end{align*}
which has rank $1$.

That is, the repair bandwidth of node $3i+2$ is
\begin{align*}
\gamma_{3i+2}&=\sum_{j=0,j\ne 3i+2}^{n-1}\mbox{rank}\left(R_{3i+2}\left(
\begin{array}{c}
I \\
A_{j}
\end{array}
\right)\right)\\
&=\sum_{i'=0,i'\ne i}^{m-1}\mbox{rank}\left(R_{3i+2}\left(
\begin{array}{c}
I \\
A_{3i'+2}
\end{array}
\right)\right)+\sum_{i'=0}^{m-1}\sum_{j'=0}^{1}\mbox{rank}\left(R_{3i+2}\left(
\begin{array}{c}
I \\
A_{3i'+j'}
\end{array}
\right)\right)\\
&=2(m-1)+2m\\
&=4m-2.
\end{align*}
Similarly, the rebuilding access of node $3i+2$ is $\Gamma_{3i+2}=6m-2$.
\end{itemize}
With the above repair strategy, a direct calculation shows that
$\gamma^{Nor}_{Ave}=\frac{2n-3}{3n-6}$ and $\Gamma^{Nor}_{Ave}=\frac{7n-9}{9n-18}$.
\end{proof}

In the following, we propose an alternative repair strategy for nodes $3i+2$ where $i\in [0: m)$, which aims to minimize the rebuilding access. We also present the average normalized repair bandwidth and average normalized rebuilding access of the array code in Construction \ref{Con2} under this repair strategy.
\begin{Theorem}\label{C2_2st_repair}
The nodes of Construction \ref{Con2} can be repaired with the average normalized repair bandwidth and average normalized rebuilding access being $$\gamma'^{Nor}_{Ave}=\Gamma'^{Nor}_{Ave}=\frac{13n-18}{18n-36}.$$
\end{Theorem}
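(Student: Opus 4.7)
My plan is to reuse the repair matrices $R_{3i}$ and $R_{3i+1}$ from the proof of Theorem \ref{C2_1st_repair} (which already satisfy $\gamma_{3i}=\Gamma_{3i}=\gamma_{3i+1}=\Gamma_{3i+1}=4m-2$) and redesign only the repair matrix of the nodes $3i+2$. The governing idea is to choose $R_{3i+2}$ so that on every interference block the rank equals the number of nonzero columns, which automatically forces $\gamma_{3i+2}=\Gamma_{3i+2}$ per node. Concretely, I would propose
\begin{equation*}
R_{3i+2}=\begin{pmatrix} 1 & 0 & 0 & 0\\ 0 & 1 & 1 & 0 \end{pmatrix},\qquad i\in[0:m).
\end{equation*}

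Next, I would verify the proposal by a short case analysis on the helper type $j'\in\{0,1,2\}$. Multiplying out $R_{3i+2}\begin{pmatrix}I\\ A_{3i'+j'}\end{pmatrix}$ gives, for the useful-data slot $j'=2,\,i'=i$, the matrix $\begin{pmatrix}1 & 0\\ \lambda_i+1 & 1\end{pmatrix}$, of determinant $1$, so the useful-data block is full rank. For $j'=0$ it gives $\begin{pmatrix}1 & 0\\ \lambda_{i'} & 0\end{pmatrix}$: the second column vanishes because the last entry of each row of $R_{3i+2}$ is zero and the middle two entries of its second row coincide, making $R_{3i+2}\cdot(0,1,-1,\lambda_{i'}+1)^{\!\top}=\mathbf{0}$; hence rank $1$ with one nonzero column. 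For $j'=1$ and for $j'=2$ with $i'\neq i$, the products are $\begin{pmatrix}1 & 0\\ \lambda_{i'} & 1\end{pmatrix}$ and $\begin{pmatrix}1 & 0\\ \lambda_{i'}+1 & 1\end{pmatrix}$, each of determinant $1$, yielding rank $2$ with two nonzero columns.

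Summing over the $3m-1$ helpers of node $3i+2$ then gives $\gamma_{3i+2}=\Gamma_{3i+2}=m\cdot 1+m\cdot 2+(m-1)\cdot 2=5m-2$. Combining with the contributions of nodes $3i$ and $3i+1$ from Theorem \ref{C2_1st_repair} yields $\sum_{j=0}^{n-1}\gamma_j=\sum_{j=0}^{n-1}\Gamma_j=2m(4m-2)+m(5m-2)=m(13m-6)$, and dividing by $nkN=6m(3m-2)$ (with $m=n/3$) gives
\begin{equation*}
\gamma'^{Nor}_{Ave}=\Gamma'^{Nor}_{Ave}=\frac{13m-6}{6(3m-2)}=\frac{13n-18}{18n-36},
\end{equation*}
as claimed.

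The main obstacle is to design a single $R_{3i+2}$ that keeps the useful-data block of full rank while simultaneously forcing the second column of as many interference blocks as possible to vanish. Requiring the $j'=0$ block to have its second column zero for every $i'$ already fixes four of the eight entries of $R_{3i+2}$; if one further demands that the $j'=1$ block also have a zero column, the remaining freedom forces the useful-data block to drop to rank $1$. Thus the asymmetric design above, which sacrifices the single-column property on the type-$1$ and $(i'\neq i)$ type-$2$ interference blocks in exchange for access $1$ on every type-$0$ helper, is essentially the only viable choice, and it leads precisely to the balanced value $\gamma_{3i+2}=\Gamma_{3i+2}=5m-2$.
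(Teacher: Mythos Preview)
Your proposal is correct and follows essentially the same approach as the paper: you choose the identical repair matrix $R'_{3i+2}=\begin{pmatrix}1&0&0&0\\0&1&1&0\end{pmatrix}$, perform the same case analysis on $j'\in\{0,1,2\}$ to obtain $\gamma'_{3i+2}=\Gamma'_{3i+2}=5m-2$, and combine with the per-node values $4m-2$ from Theorem~\ref{C2_1st_repair} to reach the claimed average. Your final paragraph on why this choice is essentially forced is extra motivation not present in the paper, but the proof itself is the same.
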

\begin{proof}
We modify the repair matrix of node $3i+2$
to
\begin{equation*}
R'_{im+2}=\begin{pmatrix}
1 & 0&0&0\\
0&1&1& 0
\end{pmatrix}
\end{equation*}
for $i\in [0: m)$, while keeping the repair matrices of the other nodes the same as those used in the first repair strategy in the proof of Theorem \ref{C2_1st_repair}.

Now for $i,i'\in [0: m)$, we have

\begin{align*}
R'_{3i+2}\begin{pmatrix}
I\\
A_{3i'+2}
\end{pmatrix}&=\begin{pmatrix}
1 & 0&0&0\\
0&1&1& 0
\end{pmatrix}\begin{pmatrix}
1 & 0\\
0& 1\\
\lambda_{i'}+1 & 0\\
0& \lambda_{i'}
\end{pmatrix}
=\begin{pmatrix}
1 & 0\\
\lambda_{i'}+1 & 1
\end{pmatrix},
\end{align*}
which is of full rank.

\begin{align*}
R'_{3i+2}\begin{pmatrix}
I\\
A_{3i'}
\end{pmatrix}&=\begin{pmatrix}
1 & 0&0&0\\
0&1&1& 0
\end{pmatrix}\begin{pmatrix}
1 & 0\\
0& 1\\
\lambda_{i'} & -1\\
0& \lambda_{i'}+1
\end{pmatrix}=\begin{pmatrix}
1 & 0 \\
\lambda_{i'} &0
\end{pmatrix},
\end{align*}
which has rank $1$.

\begin{align*}
R'_{3i+2}\begin{pmatrix}
I\\
A_{3i'+1}
\end{pmatrix}&=\begin{pmatrix}
1 & 0&0&0\\
0&1&1& 0
\end{pmatrix}\begin{pmatrix}
1 & 0\\
0& 1\\
\lambda_{i'} & 0\\
1& \lambda_{i'}+1
\end{pmatrix}=\begin{pmatrix}
1 & 0\\
\lambda_{i'} & 1
\end{pmatrix},
\end{align*}
which has a rank of $2$.

With this repair strategy, the repair bandwidth of node $3i+2$ is
\begin{align*}
\gamma'_{3i+2}&=\sum_{j=0,j\ne 3i+2}^{n-1}\mbox{rank}\left(R'_{3i+2}\left(
\begin{array}{c}
I \\
A_{j}
\end{array}
\right)\right)\\
&=\sum_{i'=0,i'\ne i}^{m-1}\mbox{rank}\left(R'_{3i+2}\left(
\begin{array}{c}
I \\
A_{3i'+2}
\end{array}
\right)\right)+\sum_{i'=0}^{m-1}\sum_{j'=0}^{1}\mbox{rank}\left(R'_{3i+2}\left(
\begin{array}{c}
I \\
A_{3i'+j'}
\end{array}
\right)\right)\\
&=2(m-1)+3m\\
&=5m-2.
\end{align*}
Similarly, the rebuilding access of node $3i+2$ is $\Gamma'_{3i+2}=5m-2$.

With the new repair strategy for nodes $3i+2$ where $i\in [0: m)$, a direct calculation shows that
$\gamma'^{Nor}_{Ave}=\Gamma'^{Nor}_{Ave}=\frac{13n-18}{18n-36}$.
\end{proof}

\begin{Remark}\label{remark1}
In Construction \ref{Con2}, we assumed $n=3m$ and there are three types of matrices in the building blocks of the parity-check matrix, i.e., upper triangular matrices, lower triangular matrices, and diagonal matrices, with each type appearing  an equal number of times. While this construction is systematic, it does not always yield the smallest average rebuilding access. An extreme optimization of the average rebuilding access may dynamically adjust the number of occurrences of each type of building block. For example, we can  replace the building blocks of the parity-check matrix
in \eqref{Eqn_Con2_blocks} with the following

\begin{equation*}
A_{i}=\left\{
\begin{array}{ll}
\begin{pmatrix}
\lambda_i & -1\\
0& \lambda_i+1
\end{pmatrix}, &\mbox{\ if\ } 0\le i<l_1, \\[12pt]
\begin{pmatrix}
\lambda_{i-l_1} & 0\\
1& \lambda_{i-l_1}+1
\end{pmatrix}, & \mbox{\ if \ } l_1\le i<l_1+l_2,
\\[12pt]
\begin{pmatrix}
\lambda_{i-l_1-l_2}+1 & 0\\
0& \lambda_{i-l_1-l_2}
\end{pmatrix}, & \mbox{\ if\ } l_1+l_2\le i<n.
\\[12pt]
\end{array}
\right.
\end{equation*}
\end{Remark}

\begin{Remark}
If we replace the building blocks of the parity-check matrix
in \eqref{Eqn_Con2_blocks} according to Remark \ref{remark1}, and
let $l_3=n-l_1-l_2$, then a direct calculation shows that
\begin{align}\label{Eqn_alt}
\Gamma^{Nor}_{Ave}=\frac{n^2-2n+l_1^2+l_2^2+l_3^2+l_2l_3}{2nk},
\end{align}
and we can choose the values of  $l_1$ and $l_2$ such that $l_3\le l_1$ and $\Gamma^{Nor}_{Ave}$ in \eqref{Eqn_alt} is minimized.
In addition, the repair of only the last $l_3$ nodes requires access to one more symbol than the construction in \cite{wu2021achievable} that achieves the lower bound in \eqref{Eqn_Acess_bound}.

In fact, $\lim\limits_{n\to +\infty}\frac{13n-18}{18n-36}\in [0.7222, 0.7223)$, indicating that the average normalized rebuilding access of the MDS code in Construction \ref{Con2} is very close to the lower bounds in \eqref{Eqn_Acess_bound} and \eqref{Eqn_Acess>0.72}.
\end{Remark}

\section{Comparisons}\label{sec:comp}
In this section, we compare some key parameters among the proposed MDS array codes and several existing notable MDS codes with two parity nodes and a sub-packetization level of $2$.

\begin{table*}[htbp]
\begin{center}
\caption{A comparison of some key parameters among the MDS array codes proposed in this paper and some existing ones with two parity nodes and a sub-packetization level of $2$}\label{Table comp}
\setlength{\tabcolsep}{2.7pt}
\begin{tabular}{|c|c|c|c|c|c|c|}
\hline
& Code length $n$ & $\gamma^{Nor}_{Ave}$ & $\Gamma^{Nor}_{Ave}$ & Required field size $q$& References\\
\hline
\hline
The product-matrix MSR code & $5$ & $2/3$ & $4/3$ & $\ge 10$& \cite{rashmi2011optimal}\\
\hline
The long MDS code & $6$ & $3/4$ & $23/24$ & $\ge 4$ & \cite{wang2016explicit}\\
\hline
The GKW code & $5$ & $2/3$ &$7/6$ & $\ge 4$& \cite{guan2017construction}\\
\hline
The GKWX code & $6$ & $5/8$ &$47/48$& $\ge 4$ & \cite{guan2017new}\\
\hline
The WHLBZZW code & Arbitrary & $>0.72$ &$>0.72$& Sufficient large & \cite{wu2021achievable}\\
\hline
Non-DRF code 1  & Arbitrary & $\frac{k+\lfloor n/4\rfloor+\frac{1}{n}(n\mod 4)\lceil n/4\rceil}{2k}$ &Not discussed& $>n+2$ & \cite{zhang2025optimal}\\[3pt]
\hline
Non-DRF Code 2& Arbitrary & Not discussed &$\frac{k+\lfloor n/3\rfloor+\frac{1}{n}(n\mod 3)\lceil n/3\rceil}{2k}$& $>n$ & \cite{zhang2025optimal}\\[3pt]
\hline
MDS codes in Construction \ref{Con1} & Arbitrary & $\frac{5n-8}{8n-16}$ &$\frac{13n-16}{16n-32}$ (Asymptotically optimal)& $>\frac{3n}{4}$ & Thm. \ref{Thm_MDS_RB}\\[3pt]
\hline
\multirow{2}{*}{MDS codes in Construction \ref{Con2}} & \multirow{2}{*}{Arbitrary} & \multirow{2}{*}{$\left\{\hspace{-2mm}
\begin{array}{ll}
\frac{2n-3}{3n-6},&\hspace{-2mm} \mbox{1st\ repair\ strategy}
\\[4pt]
\frac{13n-18}{18n-36},&\hspace{-2mm} \mbox{2nd\ repair\ strategy}
\end{array}
\right. $} & \multirow{2}{*}{$\left\{\hspace{-2mm}
\begin{array}{ll}
\frac{7n-9}{9n-18}, &\hspace{-2mm} \mbox{1st\ repair\ strategy}
\\[4pt]
\frac{13n-18}{18n-36}, &\hspace{-2mm} \mbox{2nd\ repair\ strategy}
\end{array}
\right.$} &\multirow{2}{*}{ $\ge\left\{\hspace{-2mm}
\begin{array}{ll}
\frac{2n}{3}, &\hspace{-2mm} \mbox{if \ } 2\mid q
\\[4pt]
n, & \hspace{-2mm}\mbox{otherwise}
\end{array}
\right.$} & Thms. \ref{Thm_C2_MDS}, \ref{C2_1st_repair}\\[10pt]
& & & & & Thms. \ref{Thm_C2_MDS}, \ref{C2_2st_repair}\\
\hline
\end{tabular}
\end{center}
\end{table*}

From Table \ref{Table comp}, we observe that the proposed MDS codes have the following advantages:
\begin{itemize}
\item The product-matrix MSR code in \cite{rashmi2011optimal}, the long MDS code in \cite{wang2016explicit}, the GKW code in \cite{guan2017construction}, and the GKWX code in \cite{guan2017new} with two parities and a sub-packetization level of $2$ only support a code length of $5$ or $6$. In contrast, the new MDS array codes in Constructions \ref{Con1} and \ref{Con2} support arbitrary code lengths.

\item Compared to the WHLBZZW code in \cite{wu2021achievable}, the new MDS code in Construction \ref{Con1} has a smaller repair bandwidth and is explicit over a small finite field.

\item The new MDS array code in Construction \ref{Con1} is asymptotically optimal w.r.t. the lower bound on the average repair bandwidth in \eqref{Eqn_RB_bound_no}. 

\item Compared to the codes in \cite{zhang2025optimal}, both new MDS array codes in Constructions \ref{Con1} and \ref{Con2} exhibit the DRF property and require a smaller finite field size.

\item The new MDS code in Construction \ref{Con2} supports two repair mechanisms. The first mechanism achieves a smaller repair bandwidth than that in \cite{wu2021achievable}, while the second mechanism results in rebuilding access that is very close to the lower bounds in \eqref{Eqn_Acess_bound} and \eqref{Eqn_Acess>0.72}.
\end{itemize}

\section{Conclusion}\label{sec:concl}
In this paper, we proposed two constructions of MDS array codes with two parity nodes and a sub-packetization level of $2$ for arbitrary code lengths. The required finite field sizes, which are smaller than the code lengths, were determined for both constructions. The first construction offers the smallest repair bandwidth among all existing constructions with the same parameters, and is asymptotically optimal w.r.t. the lower bound derived in \cite{zhang2025optimal}. The second one supports two repair mechanisms, with one focusing on optimizing the repair bandwidth while the other emphasizing rebuilding access.
While extending these constructions to include more parities is feasible, determining the required finite field sizes presents challenges that will be addressed in future research.


\ifCLASSOPTIONcaptionsoff
\newpage
\fi

\bibliographystyle{ieeetr}
\bibliography{MDS_subpack_two}

\end{document}